\documentclass{article}

\usepackage{amsfonts}
\usepackage{mathrsfs}
\usepackage{amssymb}
\usepackage{latexsym}
\usepackage{graphicx}
\usepackage{enumerate}
\usepackage{amsmath}
\usepackage{mathtools}
\usepackage{color}
\usepackage{float}
\usepackage{array}
\usepackage[backend=bibtex]{biblatex}
\addbibresource{Note.bib}


\parindent 0pt 
\parskip .5pc
\mathtoolsset{centercolon}

\usepackage{amsthm}
\newtheorem{theorem}{Theorem}[section]

\newtheorem{proposition}[theorem]{Proposition}


\usepackage{amssymb,tikz}

\newcommand{\defin}[1]{\textbf{#1}}

\newcommand{\lthen}{\rightarrow}

\newcommand{\val}[1]{[\![ #1 ]\!]}
\renewcommand{\phi}{\varphi}


\renewcommand{\L}{\mathcal{L}}

\title{A Note on Proper Relational Structures}
\author{Adam Bjorndahl and Philip Sink}

\begin{document}

\maketitle

\section{Background}

Relational structures are familiar mathematical objects from the world of modal logic; they are the most standard structures with respect to which simple modal languages are semantically interpreted. Let $\L_{n}$ denote the propositional modal language give by
$$\phi ::= p \, | \, \lnot \phi \, | \, \phi \land \psi \, | \, K_{i} \phi,$$
where $p$ is drawn from some countable set of \emph{primitive propositions}, $\textsc{prop}$, and $i \in \{1, \ldots, n\}$. This is a simple language that can be used to reason about $n$ agents and what they know; in this context, $K_{i}\phi$ is read ``agent $i$ knows $\phi$''.

A \defin{relational structure (for $\L_{n}$)} is a tuple $M = (X, (R_{i})_{i=1}^{n}, v)$ where $X$ is a nonempty set, each $R_{i}$ is a binary relation on $X$, and $v: \textsc{prop} \to 2^{X}$ is a valuation. Semantics are standard:
\begin{eqnarray*}
M,x \models p & \textrm{iff} & x \in v(p)\\
M,x \models \lnot \phi & \textrm{iff} & M,x \not\models \phi\\
M,x \models \phi \land \psi & \textrm{iff} & M,x \models \phi \textrm{ and } M,x \models \psi\\
M,x \models K_{i} \phi & \textrm{iff} & R_{i}(x) \subseteq \val{\phi}
\end{eqnarray*}
where $R_{i}(x) = \{y \in X \: : \: xR_{i}y\}$ and $\val{\phi} = \{x \in X \: : \: M,x \models \phi\}$. Typically the $R_{i}$ are assumed to be at least reflexive, encoding factivity of knowledge ($K_{i} \phi \lthen \phi$), and often they are taken to be equivalence relations, encoding positive and negative introspection ($K_{i}\phi \lthen K_{i} K_{i} \phi$ and $\lnot K_{i} \phi \lthen K_{i} \lnot K_{i} \phi$, respectively). 

A relational structure is called \defin{proper} if it contains no pair of distinct points $x,y \in X$ such that for all $i$, $xR_{i}y$. This condition arises in the context of \emph{simplicial semantics} for epistemic logic, in which formulas of $\L_{n}$ are interpreted in simplicial complexes
(see \cite{Death} and \cite{KaSC} for the basic definitions of this framework). Properness is a technical condition that allows for a natural truth-preserving translation from certain relational structures, in particular those where the relations are equivalence relations, to simplicial complexes: the translation proceeds by constructing the nodes of the simplicial complex out of the equivalence classes in the relational model; very roughly speaking, when properness fails, there are not ``enough'' equivalence classes to build a rich enough complex.

Although the properness assumption has been deployed repeatedly to facilitate this translation
\cite{KaSC,DoA,Death,SimpDEL1,FA,SimpDEL2},
to the best of our knowledge there has been no systematic study of just how restrictive it is. In this note we show that in fact properness is not restrictive at all: \textit{every} relational structure is equivalent (via bisimulation) to a proper relational structure.%
\footnote{The only other formal discussion of this topic that we are aware of occurs in \cite{Death}, where the translation of a particular ``canonical'' Kripke model, which is not itself proper, is conjectured to be bisimilar to a proper Kripke model using a broad ``unwinding'' method. Unfortunately, this technique would not, in general, preserve properties like transitivity and symmetry, limiting its usefulness in the context of simplicial semantics, where these properties are typically necessary.}




\section{Result}

We begin with the finite case: let $M = (X, (R_{i})_{i=1}^{n}, v)$ be a relational structure in which $X$ is finite; say $|X| = m$ and $X = \{x_{1}, \ldots, x_{m}\}$. We will construct a new relational structure $\tilde{M} = (\tilde{X}, (\tilde{R}_{i})_{i=1}^{n}, \tilde{v})$, prove that it is proper, and exhibit a surjective, bounded morphism from $\tilde{M}$ to $M$. For a review of bounded morphisms, we direct the reader to \cite{BB}.

To begin, set $\tilde{X} = X \times X$; it is helpful to picture this new state space as consisting of $m$ disjoint copies of the original set $X$, namely,
$$\tilde{X} = X \times \{x_{1}\} \cup \cdots \cup X \times \{x_{m}\},$$
so the point $(x_{j}, x_{k}) \in \tilde{X}$ may be thought of as the $j$th element of the $k$th copy of $X$. Accordingly, we define
$$\tilde{v}(p) = \{(x_{j},x_{k}) \in \tilde{X} \: : \: x_{j} \in v(p)\};$$
in other words, the primitive propositions true at $(x_{j},x_{k})$ in $\tilde{M}$ are precisely those that are true at $x_{j}$ in $M$.

Lastly we must define the relations, and we do so with the idea of preserving this correspondence in truth between $(\tilde{M}, (x_{j}, x_{k}))$ and $(M, x_{j})$ (indeed, we will ultimately show that projection to the first coordinate is the promised bounded morphism). For $i > 1$, define
$$(x_{j},x_{k}) \tilde{R}_{i} (x_{j'},x_{k'}) \textrm{ iff $k = k'$ and $x_{j} R_{i} x_{j'}$}.$$
Notice that if we imposed this same definition for $i=1$, then $\tilde{M}$ would simply be the disjoint union of $m$ copies of $M$. Instead, we will define $\tilde{R}_{1}$ using a different partition of $\tilde{X}$:
let $=_m$ denote equality modulo $m$, and for each $\ell \in \{0, \ldots, m-1\}$, let
$$\tilde{X}_{\ell} = \{(x_{j}, x_{k}) \in \tilde{X} \: : \: k - j =_m \ell\},$$
Thus, for example, $\tilde{X}_{0} = \{(x_{1}, x_{1}), (x_{2}, x_{2}), \ldots, (x_{m}, x_{m})\}$ is just the ``diagonal'' subset, $\tilde{X}_{1} = \{(x_{1}, x_{2}), (x_{2}, x_{3}), \ldots, (x_{m}, x_{1})\}$, and so on. It is then easy to check the following.

\begin{proposition}
The collection $\{\tilde{X}_{\ell} \: : \: 0 \leq \ell \leq m-1\}$ partitions $\tilde{X}$. Moreover, for each $\tilde{X}_{\ell}$, projection to the first component is a bijection between $\tilde{X}_{\ell}$ and $X$.
\end{proposition}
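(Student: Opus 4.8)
The plan is to reduce both assertions to a single elementary fact about residues modulo $m$, namely that each congruence class has exactly one representative in the index set $\{1, \ldots, m\}$. The organizing observation is that the sets $\tilde{X}_{\ell}$ are precisely the fibers of the ``difference map''
$$d \colon \tilde{X} \to \{0, \ldots, m-1\}, \qquad d(x_{j}, x_{k}) = (k - j) \bmod m,$$
so that $\tilde{X}_{\ell} = d^{-1}(\ell)$. Once this is set up, the partition claim is immediate from the general fact that the fibers of a total function partition its domain, and the bijection claim follows from counting solutions to a single congruence.

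First I would verify the partition. Because $d$ is defined on all of $\tilde{X}$ and each point is sent to exactly one value in $\{0, \ldots, m-1\}$, the fibers $d^{-1}(\ell)$ are pairwise disjoint and their union is $\tilde{X}$; this is exactly the statement that $\{\tilde{X}_{\ell}\}$ partitions $\tilde{X}$. To confirm that no cell is empty, so that we genuinely have a partition into nonempty blocks, I would exhibit for each $\ell$ the point $(x_{1}, x_{k})$, where $k \in \{1, \ldots, m\}$ is the unique index with $k \equiv 1 + \ell \pmod{m}$; by construction $d(x_{1}, x_{k}) = \ell$.

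Next, fixing $\ell$, I would show that the restriction of the first-coordinate projection $\pi_{1}(x_{j}, x_{k}) = x_{j}$ to $\tilde{X}_{\ell}$ is a bijection onto $X$. The key point is that membership $(x_{j}, x_{k}) \in \tilde{X}_{\ell}$ is equivalent to $k \equiv j + \ell \pmod{m}$, and for each fixed $j$ this congruence has exactly one solution $k$ in $\{1, \ldots, m\}$. Surjectivity then follows by taking, for a given $x_{j}$, that unique $k$ as the second coordinate; injectivity follows because if $(x_{j}, x_{k})$ and $(x_{j}, x_{k'})$ both lie in $\tilde{X}_{\ell}$, then $k \equiv k' \pmod{m}$ with $k, k' \in \{1, \ldots, m\}$, forcing $k = k'$.

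I expect the only real care required to be the index bookkeeping: the coordinates are labelled by $\{1, \ldots, m\}$ rather than $\{0, \ldots, m-1\}$, so at each step I must invoke the fact that every residue class modulo $m$ meets $\{1, \ldots, m\}$ in exactly one point. There is no genuine mathematical obstacle here; the entire content lies in recognizing the $\tilde{X}_{\ell}$ as the fibers of $d$ and in the unique solvability of $k \equiv j + \ell \pmod{m}$ for $k$.
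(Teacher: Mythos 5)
Your proposal is correct and follows essentially the same route as the paper's proof: both arguments reduce everything to the unique solvability of $k \equiv j + \ell \pmod{m}$ for $k \in \{1, \ldots, m\}$, with your ``fibers of the difference map'' framing being just a repackaging of the paper's direct check of disjointness and covering. The only (harmless) addition is your explicit verification that each cell is nonempty, which the paper leaves implicit.
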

\begin{proof}
	Suppose $(x_{j},x_{k}) \in \tilde{X}_{\ell} \cap \tilde{X}_{\ell'}$
Then $\ell=_mk-j=_m\ell'$. Since $\ell,\ell'\in\{0,\ldots,m-1\}$, we have that $\ell=\ell'$.
This shows that the $\tilde{X}_{\ell}$ are mutually disjoint.
Moreover, given any $(x_{j},x_{k}) \in \tilde{X}$,
clearly $(x_{j},x_{k})\in\tilde{X}_{\ell}$ for $\ell =_m k-j$; thus,
$\{\tilde{X}_{\ell} \: : \: 0 \leq \ell \leq m-1\}$ partitions $\tilde{X}$.
	
	Define $\pi_{1}:\tilde{X} \to X$ by $\pi_{1}(x_{j},x_{k}) = x_{j}$, namely, projection to the first component.
	Let $(x_{j},x_{k}), (x_{j'},x_{k'}) \in \tilde{X}_{\ell}$, and suppose $\pi_{1}(x_{j},x_{k}) = \pi_{1}(x_{j'},x_{k'})$, so $x_{j}=x_{j'}$ (which means that $j = j'$). Then, since $k-j =_m \ell =_m k'-j'$, we have that $k=k'$ and so $x_{k}=x_{k'}$. This shows $\pi_{1}$ is injective when restricted to any $\tilde{X}_{\ell}$. Next, given $x_j \in X$, let $k \in \{1, \ldots, m\}$ be such that $k =_m j+\ell$; then by definition $(x_{j},x_{k}) \in \tilde{X}_{\ell}$ and $\pi_{1}(x_{j},x_{k}) = x_j$, which shows that $\pi_{1}$ is surjective when restricted to any $\tilde{X}_{\ell}$.
\end{proof}

Having established this new way of breaking $\tilde{X}$ into copies of $X$, we define
$$(x_{j},x_{k}) \tilde{R}_{1} (x_{j'},x_{k'}) \textrm{ iff $k - j =_m k' - j'$ and $x_{j} R_{1} x_{j'}$}.$$
So, like the other relations, $\tilde{R}_{1}$ is also $m$ disjoint copies of $R_{1}$, but skewed across a different partition: one copy on each set $\tilde{X}_{\ell}$.

\begin{proposition}
$\tilde{M}$ is proper.
\end{proposition}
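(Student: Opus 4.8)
The plan is to show directly that $\tilde{M}$ contains no pair of distinct points that are related by \emph{every} $\tilde{R}_i$. So suppose, for contradiction, that $(x_j, x_k)$ and $(x_{j'}, x_{k'})$ are distinct points of $\tilde{X}$ with $(x_j, x_k) \tilde{R}_i (x_{j'}, x_{k'})$ for all $i \in \{1, \ldots, n\}$. The idea is to extract two incompatible constraints on the indices from the definitions of the relations, and the natural split is between the relations with $i > 1$ and the single relation $\tilde{R}_1$, since these were defined across \emph{different} partitions of $\tilde{X}$.

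First I would handle the degenerate case $n = 1$ separately, or simply note that when $n > 1$ we may use the relation $\tilde{R}_2$. From $(x_j, x_k) \tilde{R}_2 (x_{j'}, x_{k'})$, the definition for $i > 1$ forces $k = k'$. From $(x_j, x_k) \tilde{R}_1 (x_{j'}, x_{k'})$, the definition of $\tilde{R}_1$ forces $k - j =_m k' - j'$. Combining $k = k'$ with $k - j =_m k' - j'$ yields $j =_m j'$, and since $j, j' \in \{1, \ldots, m\}$ this gives $j = j'$. Together with $k = k'$ this makes the two points identical, contradicting the assumption that they were distinct. This is the heart of the argument: the two partitions are ``transverse'' in exactly the sense needed, so no two distinct points can simultaneously lie in the same copy under both partitionings unless they coincide.

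The main thing to be careful about is the case $n = 1$, where there is no relation $\tilde{R}_i$ with $i > 1$ to supply the constraint $k = k'$. In that case the only relation is $\tilde{R}_1$, and properness requires that no two distinct points be related by $\tilde{R}_1$ alone. I would check that this still holds: if $(x_j, x_k) \tilde{R}_1 (x_{j'}, x_{k'})$ then $k - j =_m k' - j'$, i.e. both points lie in the same $\tilde{X}_\ell$; but by the preceding Proposition, projection to the first component is injective on each $\tilde{X}_\ell$, and $x_j R_1 x_{j'}$ together with $x_j = x_{j'}$ would be needed for the points to coincide. The subtle point is that properness only forbids related \emph{distinct} points, and within a single $\tilde{X}_\ell$ two distinct points always have distinct first components, so they are never identified — but we must confirm the definition does not relate them in a way that violates properness. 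I expect this edge case to be the only genuine obstacle; the generic $n > 1$ argument is a short index computation.

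Syntactically, I would present this as a brief proof opened with \verb|\begin{proof}| and closed with \verb|\end{proof}|, stating the contradiction hypothesis in prose and carrying out the modular-arithmetic deduction inline using the paper's $=_m$ notation.
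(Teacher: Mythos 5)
Your main argument is exactly the paper's proof: $\tilde{R}_1$ forces $k-j=_m k'-j'$, any $\tilde{R}_i$ with $i>1$ forces $k=k'$, hence $j=j'$ and the points coincide. You are right to flag $n=1$ as the weak point --- the paper's own proof also silently assumes $n>1$ when it says ``taking $i>1$'' --- but your attempted resolution of that case does not go through: for $n=1$ the construction is \emph{not} proper in general, since whenever $x_j R_1 x_{j'}$ with $j\neq j'$ the distinct points $(x_j,x_j)$ and $(x_{j'},x_{j'})$ both lie in $\tilde{X}_0$ and are $\tilde{R}_1$-related. Indeed the theorem itself fails for $n=1$ (a single-agent model with a nontrivial edge need not be bisimilar to any proper model), so the correct disposition of your edge case is that $n\geq 2$ is a standing hypothesis, not something to be verified.
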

\begin{proof}
	Suppose $(x_{j},x_{k}) \tilde{R}_{i} (x_{j'},x_{k'})$ for each $i \in \{1, \ldots, n\}$. Taking $i=1$, we know that $k-j=_mk'-j'$. Taking $i>1$, we know that $k=k'$. So, $j=j'$, and therefore $(x_{j},x_{k})=(x_{j'},x_{k'})$.
\end{proof}

\begin{proposition}
Projection to the first coordinate is a surjective, bounded morphism from $\tilde{M}$ to $M$.
\end{proposition}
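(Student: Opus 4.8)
The plan is to verify, for the projection $\pi_{1}$, the three defining conditions of a bounded morphism---atomic harmony, the forth condition, and the back condition---together with surjectivity. Surjectivity and atomic harmony are essentially immediate: surjectivity follows at once from the previous proposition, since $\pi_{1}$ restricted to the diagonal $\tilde{X}_{0}$ is already a bijection onto $X$; and atomic harmony holds by the very definition of $\tilde{v}$, which puts $(x_{j},x_{k}) \in \tilde{v}(p)$ precisely when $x_{j} \in v(p)$, that is, precisely when $\pi_{1}(x_{j},x_{k}) \in v(p)$.

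The forth condition likewise reads off directly from the definitions of the relations. If $(x_{j},x_{k}) \tilde{R}_{i} (x_{j'},x_{k'})$, then in either case---$i=1$ or $i>1$---the definition of $\tilde{R}_{i}$ explicitly carries the clause $x_{j} R_{i} x_{j'}$, which is exactly $\pi_{1}(x_{j},x_{k}) \, R_{i} \, \pi_{1}(x_{j'},x_{k'})$, so nothing further is needed.

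The only condition requiring an actual construction is the back condition, which is where I would focus attention. Suppose $\pi_{1}(x_{j},x_{k}) \, R_{i} \, u$; writing $u = x_{j'}$, this says $x_{j} R_{i} x_{j'}$, and I must produce a witness $(x_{j'},x_{k'}) \in \tilde{X}$ with $(x_{j},x_{k}) \tilde{R}_{i} (x_{j'},x_{k'})$. For $i>1$ the choice is forced and trivial: take $k' = k$, so that both $k = k'$ and $x_{j} R_{i} x_{j'}$ hold. For $i=1$ the subtlety is that the witness must lie in the \emph{same} partition block as $(x_{j},x_{k})$; setting $\ell =_m k-j$, I would invoke the first proposition to select the unique $k' \in \{1,\ldots,m\}$ with $k'-j' =_m \ell$, which places $(x_{j'},x_{k'})$ in $\tilde{X}_{\ell}$ and yields $(x_{j},x_{k}) \tilde{R}_{1} (x_{j'},x_{k'})$ as required.

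I expect the back condition for $i=1$ to be the only point needing care, precisely because $\tilde{R}_{1}$ is defined across the skewed partition $\{\tilde{X}_{\ell}\}$ rather than the naive one. The potential worry is that the second coordinate forced by $\tilde{R}_{1}$ might fail to exist or be ambiguous; but the bijection established in the first proposition guarantees that, for each block $\tilde{X}_{\ell}$ and each first coordinate $x_{j'}$, exactly one second coordinate $k'$ keeps the pair in $\tilde{X}_{\ell}$, so the witness always exists and is uniquely determined, and no genuine obstacle arises.
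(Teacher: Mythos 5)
Your proposal is correct and follows essentially the same route as the paper: the forth condition and atomic harmony are read off the definitions, and the back condition for $i=1$ is handled by choosing the unique $k'$ with $k'-j'=_m k-j$ so that the witness lands in the same block $\tilde{X}_{\ell}$, exactly as in the paper's argument (which picks $k' =_m j' + \ell$ directly rather than citing the bijection, a cosmetic difference). Your explicit treatment of the $i>1$ back case is a small addition the paper elides as uninteresting, but nothing of substance differs.
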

\begin{proof}
	We have already seen that $\pi_{1}$ is surjective, and it
	is clear that $\pi_{1}$ also preserves the truth values of atomic formulas. To show $\pi_{1}$ is a bounded morphism,
	we need to establish the ``back'' and ``forth'' conditions for the relations $R_{i}$,
	for which the only interesting case is $i=1$.
	
	First, the ``back'' condition:
	let $(x_{j},x_{k}) \in \tilde{X}$ and $x_{j'}\in X$, and suppose $\pi_{1}(x_{j},x_{k})R_1x_{j'}$. Fix
	$\ell$
	such that $(x_{j},x_{k}) \in \tilde{X}_{\ell}$.
	Choose $k' \in \{1, \ldots, m\}$ such that $k' =_m j' + \ell$, and consider $(x_{j'},x_{k'})$. We have that $k'-j' =_m \ell$, so $(x_{j'},x_{k'}) \in \tilde{X}_{\ell}$. It follows that $(x_{j},x_{k})\tilde{R}_{1}(x_{j'},x_{k'})$ and of course $\pi_{1}(x_{j'},x_{k'}) = x_{j'}$, as desired.
	
	For the ``forth'' condition, suppose that $(x_{j},x_{k})\tilde{R}_{1}(x_{j'},x_{k'})$; then it is immediate from the definition of $\tilde{R}_{1}$ that $x_{j}R_{1}x_{j'}$.
\end{proof}

The construction for countable models is quite analogous, and perhaps even more straightforward since there is no need to appeal to modular arithmetic. Let $M = (X, (R_{i})_{i=1}^{n}, v)$ be a relational structure in which $X$ is countable; thus there exists a bijection $f \colon X \to \mathbb{Z}$. Define $\tilde{M} = (\tilde{X}, (\tilde{R}_{i})_{i=1}^{n}, \tilde{v})$ by setting
\begin{itemize}
\item
$\tilde{X} = X \times X$;
\item
$\tilde{v}(p) = \{(x,y) \in \tilde{X} \: : \: x \in v(p)\}$;
\item
for $i>1$,
$$(x,y) \tilde{R}_{i} (x',y') \textrm{ iff $y=y'$ and $x R_{i} x'$},$$
and
$$(x,y) \tilde{R}_{1} (x',y') \textrm{ iff $f(y) - f(x) = f(y') - f(x')$ and $x R_{1} x'$}.$$
\end{itemize}

Once again, we have: 
\begin{proposition}
$\tilde{M}$ is proper, and projection to the first coordinate is a surjective, bounded morphism from $\tilde{M}$ to $M$.
\end{proposition}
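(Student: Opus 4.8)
The plan is to mirror the three finite-case propositions in the countable setting, observing that the modular arithmetic $=_m$ is simply replaced everywhere by genuine equality of integer differences $f(y)-f(x)$, which only makes the bookkeeping cleaner. First I would record the analogue of the partition proposition: for each integer $\ell \in \mathbb{Z}$, set $\tilde{X}_{\ell} = \{(x,y) \in \tilde{X} : f(y)-f(x) = \ell\}$. Since $f$ is a bijection, these sets are pairwise disjoint (if $f(y)-f(x)$ equals both $\ell$ and $\ell'$ then $\ell=\ell'$) and their union is all of $\tilde{X}$, so they partition $\tilde{X}$ into a $\mathbb{Z}$-indexed family. Projection to the first coordinate $\pi_1(x,y)=x$ restricts to a bijection between each $\tilde{X}_\ell$ and $X$: injectivity holds because $f(y)-f(x)=\ell=f(y')-f(x')$ together with $x=x'$ forces $f(y)=f(y')$ and hence $y=y'$ by injectivity of $f$; surjectivity holds because given $x\in X$ there is a unique $y$ with $f(y)=f(x)+\ell$, and then $(x,y)\in\tilde{X}_\ell$ with $\pi_1(x,y)=x$.

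For properness, I would argue exactly as in the finite case: suppose $(x,y)\tilde{R}_i(x',y')$ for every $i\in\{1,\ldots,n\}$. The clause $i=1$ gives $f(y)-f(x)=f(y')-f(x')$, while any clause $i>1$ gives $y=y'$, hence $f(y)=f(y')$, and subtracting yields $f(x)=f(x')$, so $x=x'$ since $f$ is injective. Thus $(x,y)=(x',y')$, and $\tilde{M}$ has no two distinct points related by all the relations. (If there is only one agent, $n=1$, properness is immediate from injectivity of $\pi_1$ on each $\tilde{X}_\ell$, but the translation framework presumes $n\ge 2$, so I would note the generic argument suffices.)

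Finally, for the bounded-morphism claim I would verify that $\pi_1$ is surjective, atom-preserving, and satisfies the back-and-forth conditions, with the only nontrivial relation again being $\tilde{R}_1$. Surjectivity and preservation of $\tilde{v}$ are immediate from the definitions. For the \emph{forth} condition, $(x,y)\tilde{R}_1(x',y')$ directly entails $xR_1x'$ by definition, so $\pi_1(x,y)R_1\pi_1(x',y')$. For the \emph{back} condition, given $(x,y)\in\tilde{X}$ with $\pi_1(x,y)=x$ and a point $x'\in X$ satisfying $xR_1x'$, I would locate the unique $\ell$ with $(x,y)\in\tilde{X}_\ell$ (namely $\ell=f(y)-f(x)$) and then pick the unique $y'$ with $f(y')=f(x')+\ell$; this places $(x',y')\in\tilde{X}_\ell$, so $f(y')-f(x')=\ell=f(y)-f(x)$ and $xR_1x'$ together give $(x,y)\tilde{R}_1(x',y')$, with $\pi_1(x',y')=x'$ as required.

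The argument is routine throughout; the only point demanding any care is ensuring that the witness $y'$ in the back condition genuinely exists, which is exactly where the bijectivity of $f$ is used, guaranteeing that the equation $f(y')=f(x')+\ell$ has a unique solution in $X$. This is the countable counterpart of the choice of $k'$ modulo $m$ in the finite proof, and it is the single step I would state explicitly rather than leave to the reader.
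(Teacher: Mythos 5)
Your proof is correct and follows essentially the same route as the paper's: the same $\mathbb{Z}$-indexed partition into the sets $\tilde{X}_{\ell}$, the same properness argument combining the $i=1$ and $i>1$ clauses, and the same witness $y' = f^{-1}(f(y)-f(x)+f(x'))$ for the back condition. One small caveat: your parenthetical claim that for $n=1$ properness would follow from injectivity of $\pi_{1}$ on each $\tilde{X}_{\ell}$ is not right, since $\tilde{R}_{1}$ can relate distinct points lying in the same $\tilde{X}_{\ell}$; but as the construction (like the paper's) presumes $n \geq 2$, this aside does not affect the argument.
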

\begin{proof}
For each $\ell \in \mathbb{Z}$, 
define $\tilde{X}_{\ell}  = \{(x,y) \in \tilde{X} \: : \: f(y) - f(x) = \ell\}$. Analogous to
the above, it is easy to see that the collection $\{\tilde{X}_{\ell} \: : \: \ell \in \mathbb{Z}\}$ is a partition of $\tilde{X}$, and more that $\pi_{1} \colon \tilde{X}_{\ell} \to X$ is a bijection for each $\ell$.

The proof of properness is also completely parallel to the finite case:
suppose
$(x,y) \tilde{R}_{i} (x',y')$ for each
$i \in \{1, \ldots, n\}$.
Taking $i=1$, we know that
$f(y) - f(x) = f(y') = f(x')$;
taking $i>1$, we know that
$y = y'$.
It follows that $f(x) = f(x')$, so $x = x'$ and $(x,y) = (x',y')$.

It is clear as before that $\pi_{1} \colon \tilde{X} \to X$ is surjective and preserves the truth values of atomic formulas. 
Thus, to show that it is a bounded morphism, we just need to establish the ``back'' and ``forth'' conditions for the relations $R_{i}$; once again,
the only interesting case is $i=1$.

As above, ``forth'' is immediate from the definition of $\tilde{R}_{1}$. For ``back'', let $(x,y) \in \tilde{X}$ and suppose that $x R_{1} x'$. Let $y' = f^{-1}(f(y) - f(x) +f(x'))$; then $f(y') - f(x') = f(y) - f(x)$, so by definition we have $(x,y) \tilde{R}_{1} (x',y')$ and of course $\pi_{1}(x',y') = x'$, as desired.
\end{proof}

This translation can also be extended to continuum-sized models in the obvious way: namely, by replacing $\mathbb{Z}$ with $\mathbb{R}$ in the proof above (literally every other part of the proof remains the same). This is useful, for example, if one wishes to apply the transformation to a canonical model, which is typically of size continuum.\footnote{For models of other cardinalities, we could again duplicate the proof given above if we had a structure like $\mathbb{Z}$ or $\mathbb{R}$ but with the needed cardinality.
And indeed, such a structure is guaranteed to exist by the L\"owenheim-Skolem theorem.
In fact,
we only really need
to apply the L\"owenheim-Skolem theorem to the group axioms, and not $Th(\mathbb{R})$;
the proof above uses only the existence of additive inverses and the fact that $a-b=a-c$ implies $b=c$.}

Finally, we observe that many properties of the relations $R_{i}$ are preserved by this translation: for example, it is easy to see that if $R_{i}$ is reflexive, symmetric, transitive, serial, or Euclidean, then $\tilde{R}_{i}$ is as well---in all cases the proof follows immediately from the fact that $\tilde{R}_{i}$ can be represented as a disjoint union of ``copies'' of $R_{i}$.
\printbibliography[
heading=bibintoc,
title={Sources}]

\end{document}